%% file: main.tex
\documentclass[english,american]{IEEEtran}
\usepackage[T1]{fontenc}
\usepackage[utf8]{inputenc}
\usepackage{xcolor}
\usepackage{array}
\usepackage{bm}
\usepackage{multirow}
\usepackage{amsmath}
\usepackage{amssymb}
\usepackage{graphicx}
\PassOptionsToPackage{normalem}{ulem}
\usepackage{ulem}
\usepackage{amsthm}  

\newtheorem{proposition}{Proposition}

\makeatletter

\providecolor{lyxadded}{rgb}{0,0,1}
\providecolor{lyxdeleted}{rgb}{1,0,0}

\DeclareRobustCommand{\lyxsout}[1]{\ifx\\#1\else\sout{#1}\fi}

\usepackage[level=3]{Source/wgroup_message}

\usepackage{color}
\usepackage{psfrag}\usepackage{cite}\usepackage{subfigure}

\usepackage{flushend}

\usepackage[top=0.75in, bottom=1.05in, left=0.625in, right=0.625in]{geometry}

\author{
\IEEEauthorblockN{
	Zheng~Xing\textsuperscript{1,2},
	Mengru Wu\textsuperscript{3},
	Yi Zhang\textsuperscript{4},
	Guanghui Zhang\textsuperscript{5},
	Jun Gao\textsuperscript{1,2},
	Weibing~Zhao\textsuperscript{1,2},\\
	Xuhui Zhang\textsuperscript{1,2},
	Jinke Ren\textsuperscript{2,1},
	and Shuguang Cui\textsuperscript{2,1}.
}

\IEEEauthorblockA{	
		\textsuperscript{1}
		FNii-Shenzhen, CUHKSZ;
	\textsuperscript{2}
	SSE, CUHKSZ;
	\textsuperscript{3}
	Zhejiang University of Technology;\\
	\textsuperscript{4}
	Shenzhen University;
	\textsuperscript{5}
	Shandong University;
}

}

\usepackage[acronym]{glossaries}
\newcommand{\newac}{\newacronym}

\makeglossaries
\input{Source/JCgroup_acronym.tex}

\usepackage{enumitem}

\usepackage{amsmath}   
\usepackage{cuted}     

\newac{ips}{IPS}{indoor positioning system}
\newac{wlan}{WLAN}{wireless local area network}
\newac{ap}{AP}{access point}
\newac{sc}{SC}{subspace clustering}
\newac{uwb}{UWB}{ultrawideband}
\newac{hmm}{HMM}{hidden Markov model}
\newac{ema}{EM}{expectation maximization}
\newac{lbs}{LBS}{location-based service}
\newac{gpc}{GPC}{Gaussian process classification}

\newac{gpca}{GPCA}{generalized PCA}
\newac{msl}{MSL}{multistage learning}
\newac{mppca}{MPPCA}{mixture of probabilistic PCA}
\newac{alc}{ALC}{agglomerative lossy compression}
\newac{ransac}{RANSAC}{random sample consensus}
\newac{fa}{FA}{factorization-based affinity}
\newac{gpcaa}{GPCAA}{gpca-based affinity}
\newac{slbf}{SLBF}{spectral local best-fit flats}
\newac{lsa}{LSA}{local subspace affinity}
\newac{llmc}{LLMC}{locally linear manifold clustering}
\newac{ssc}{SSC}{sparse subspace clustering}
\newac{lrr}{LRR}{low-rank representation}
\newac{scc}{SCC}{spectral curvature clustering}
\newac{mfb}{MFB}{matrix factorization-based}
\newac{mssc}{MSSC}{multi-view low-rank sparse subspace clustering}
\newac{dsc}{DSC}{deep subspace clustering}
\newac{sssc}{SSSC}{Stochastic sparse subspace clustering}
\newac{mdssc}{MDSSC}{}
\newac{nmi}{NMI}{normalized mutual information}
\newac{svm}{SVM}{support vector machine}
\newac{tdoa}{TDOA}{time difference of arrival}
\newac{gmm}{GMM}{Gaussian mixture model}
\newac{pca}{PCA}{Principal Component Analysis}
\newac{dnn}{DNN}{deep neural network}
\newac{mlp}{MLP}{multilayer perceptron}
\newac{admm}{ADMM}{alternating direction method of multipliers}
\newac{gan}{GAN}{generative adversarial networks}
\newac{rsrp}{RSRP}{reference signal received power}
\newac{ss}{SS}{synchronization signal}
\newac{fim}{FIM}{Fisher information matrix}
\newac{crlb}{CRLB}{Cramer-Rao Lower Bound}
\newac{1nn}{1-NN}{1 Nearest-Neighbor}
\newac{dft}{DFT}{discrete Fourier transform}
\newac{evd}{EVD}{Eigen Value Decomposition}
\newac{ssb}{SSB}{secondary synchronization block}
\newac{lstm}{LSTM}{long short-term memory}
\newac{ppp}{PPP}{Poisson Point Process}
\newac{ofdm}{OFDM}{Orthogonal Frequency-Division Multiplexing}
\newac{padp}{PADP}{Power-Angle-Delay Profile}

\newac{imu}{IMU}{Inertial Measurement Unit}

\newac{idft}{IDFT}{Inverse Discrete Fourier Transform}
\newac{lidar}{LiDAR}{Light Detection and Ranging}

\usepackage{babel}
\usepackage{booktabs} 
\usepackage{array}

\usepackage{varwidth}

\addto\captionsamerican{}
\addto\captionsamerican{}

\addto\captionsenglish{}
\addto\captionsenglish{}

\makeatother

\usepackage{babel}
\begin{document}


\title{Annotation-Free Indoor Radio Mapping via Physics-Informed Trajectory Inference}
\maketitle
\begin{abstract}
	Constructing indoor radio maps traditionally requires extensive site surveys with precise user-location labels, making the calibration process costly and time-consuming. Existing calibration-reduction methods either depend on partial location annotations or exploit inertial measurement units (IMUs) to provide relative motion cues; however, IMU-assisted solutions are constrained by hardware availability, device-level access restrictions, and accumulated sensor drift. In this paper, we study a location-label-free indoor radio mapping problem under known access-point deployment geometry and a known walkable spatial domain. We propose a physics-informed trajectory inference framework that uses only Channel State Information (CSI), without relying on user-location labels or IMU measurements. The key idea is to recover the latent spatial coordinates of CSI measurements by exploiting the local spatial continuity of multipath propagation. To this end, we construct a Power-Angle-Delay Profile (PADP) feature distance from MIMO-OFDM CSI and show that, within a local neighborhood and under quasi-static multipath conditions, this distance provides a physically meaningful proxy for small spatial displacements. We then incorporate the PADP-based continuity constraint into a spatially regularized Bayesian inference model for joint trajectory recovery and propagation-parameter estimation. Experiments on a real-world industrial CSI dataset demonstrate that the proposed framework achieves an average localization error of 0.88 m and a relative beam map construction error of 6.68\%, improving upon representative channel-embedding and IMU-assisted baselines.
\end{abstract}


\section{Introduction}
\thispagestyle{empty}

\IEEEPARstart{I}{ndoor} localization underpins a wide range of applications in industrial automation, healthcare, intelligent networking, and emerging 6G sensing systems~\cite{zafari2019survey}. Among different sensing modalities, including camera, LiDAR, visible light, and inertial sensing, CSI-based localization~\cite{saeed2019state} has attracted considerable attention because it can exploit existing wireless infrastructure without requiring dedicated visual or ranging hardware.

Fingerprint-based localization is one of the most widely adopted paradigms for indoor positioning. It constructs a radio map by associating wireless measurements with accurately surveyed physical coordinates and then localizes users by matching online CSI observations against the calibrated map. Although this paradigm can achieve high positioning accuracy, it fundamentally relies on labor-intensive site surveys and precise user-location labels, making radio map construction expensive and difficult to scale~\cite{zhu2020indoor}. To reduce calibration costs, prior studies have explored generative models, transfer learning, interpolation, and channel charting techniques~\cite{TanPal:J25,StaYam:J23}. Nevertheless, many of these approaches still require partial location annotations, anchor points, or post-hoc geometric alignment, and thus do not fully remove the dependence on labeled fingerprints.

Another line of work attempts to reduce or avoid explicit location labels by incorporating IMUs, which provide relative motion information such as acceleration, heading, and step displacement~\cite{li2021wifi,zhao2020graphips}. These methods can be effective when reliable motion-sensor data are available. However, their practical deployment is limited by hardware heterogeneity, operating-system permission constraints, privacy policies, and accumulated drift over long trajectories~\cite{si2024environment,si2025unsupervised}. Therefore, an important open question is whether indoor radio maps can be constructed from CSI sequences without user-location labels and without auxiliary IMU measurements.

In this paper, we investigate this problem under a practical but explicitly stated deployment setting: the access-point (AP) locations, array orientations, and valid walkable spatial domain are assumed to be known from installation records or floor-plan information, while no CSI sample is annotated with the user's physical coordinate. Under this setting, the main challenge is to infer the latent trajectory associated with a temporal CSI sequence and use the recovered trajectory to support radio map construction. This problem is highly non-trivial in complex indoor environments, where severe multipath propagation, asynchronous measurements, and non-line-of-sight components make the relationship between CSI variations and physical displacement difficult to model.

Our key observation is that local angular-delay variations of CSI retain useful geometric information despite severe multipath fading. Under small displacements in a quasi-static environment, dominant multipath components induce structured CSI changes. We therefore construct a PADP feature distance from MIMO-OFDM CSI and use it as a physics-informed spatial continuity cue, where nearby locations tend to exhibit smaller PADP discrepancies.
Building on this cue, we propose a spatially regularized Bayesian trajectory inference framework. It integrates RSS, directional information, mobility constraints, and PADP-based continuity regularization to jointly recover the latent user trajectory and estimate propagation-related parameters. The resulting problem is solved by alternating parameter updates and regularized hidden-state decoding over the feasible spatial graph.

The main contributions are summarized as follows:
\begin{itemize}
	\item \textbf{Location-label-free formulation:} We study indoor radio mapping without user-location labels or IMU measurements, while explicitly assuming known AP geometry and a valid walkable domain.
	
	\item \textbf{PADP continuity cue:} We construct a PADP-based CSI distance that captures local angular-delay variations and serves as a physics-informed proxy for small spatial displacements.
	
	\item \textbf{Regularized Bayesian inference:} We develop a Bayesian trajectory inference framework that combines RSS, angle information, mobility constraints, and PADP continuity to refine the latent trajectory and propagation parameters.
	
	\item \textbf{Real-world validation:} Experiments on an industrial CSI dataset show that the proposed method achieves a localization error of \(0.88\,\mathrm{m}\) and a relative beam map error of \(6.68\%\), outperforming representative baselines.
\end{itemize}
\section{System Model}
\subsection{Propagation Modeling in MIMO-OFDM Systems}
\label{sec:propog-model}

We consider an indoor wireless environment served by \(Q\) APs located at known coordinates \(\mathbf{o}_q\in\mathbb{R}^2\), \(q\in\{1,\ldots,Q\}\). Their array orientations and the valid walkable region are assumed to be available from deployment records or floor-plan information. These quantities are deployment-side priors, not user-location annotations; no CSI sample is associated with a ground-truth user coordinate during training or inference.

A single-antenna mobile terminal moves in the target region and periodically exchanges OFDM signals with the APs. We present the model in the downlink form, while the same formulation applies to uplink channel-sounding measurements under channel reciprocity and array calibration.

Each AP has \(N_{\mathrm{ant}}\) antenna elements. To cover both Uniform Linear Array (ULA) and Uniform Planar Array (UPA) deployments, we adopt a general array-response model. Let \(\mathbf{r}_{q,n}\in\mathbb{R}^d\) be the local coordinate of the \(n\)-th antenna element of AP \(q\), and let \(\mathbf{R}_q\) map the local array frame to the global frame. For a path with direction parameter \(\boldsymbol{\vartheta}\), the steering vector is
\begin{align}
	\label{eq:general_steering}
	\mathbf{a}_q(\boldsymbol{\vartheta})
	=
	\left[
	e^{-j\frac{2\pi}{\lambda}\mathbf{u}^{\mathrm{T}}(\boldsymbol{\vartheta})\mathbf{R}_q\mathbf{r}_{q,1}},
	\ldots,
	e^{-j\frac{2\pi}{\lambda}\mathbf{u}^{\mathrm{T}}(\boldsymbol{\vartheta})\mathbf{R}_q\mathbf{r}_{q,N_{\mathrm{ant}}}}
	\right]^{\mathrm{T}}
\end{align}
where \(\lambda=c/f_c\), and \(\mathbf{u}(\boldsymbol{\vartheta})\) is the unit propagation-direction vector. For a two-dimensional azimuth model, \(\boldsymbol{\vartheta}=\theta\) and \(\mathbf{u}(\theta)=[\cos\theta,\sin\theta]^{\mathrm{T}}\). For a ULA with spacing \(\Delta\), this reduces to \(\mathbf{a}_q(\varphi)= [1,e^{-j\frac{2\pi}{\lambda}\Delta\sin\varphi},\ldots,e^{-j\frac{2\pi}{\lambda}(N_{\mathrm{ant}}-1)\Delta\sin\varphi}]^{\mathrm{T}}\), where \(\varphi\) is the relative broadside angle.

CSI is sampled every \(\delta\) seconds over \(t\in\{1,\ldots,T\}\), and the unknown user position is \(\mathbf{x}_t\in\mathbb{R}^2\). For AP \(q\), the \(\ell\)-th path has direction \(\boldsymbol{\vartheta}_{t,q}^{(\ell)}\), delay \(\tau_{t,q}^{(\ell)}\), and complex gain \(\kappa_{t,q}^{(\ell)}\). In the azimuth-only case, the absolute and relative angles satisfy \(\theta_{t,q}^{(\ell)}=(\varphi_{t,q}^{(\ell)}+\phi_q)\bmod 2\pi\), where \(\phi_q\) is the known AP orientation.

The OFDM system uses bandwidth \(B\), \(N_{\mathrm{sub}}\) subcarriers, and spacing \(\Delta f=B/N_{\mathrm{sub}}\). The array-domain channel vector of AP \(q\) on subcarrier \(m\in\{0,\ldots,N_{\mathrm{sub}}-1\}\) is
\begin{align}
	\label{eq:channel}
	\mathbf{h}_{t,q}^{(m)}
	=
	\sum_{\ell=1}^{L_{t,q}}
	\kappa_{t,q}^{(\ell)}
	e^{-j2\pi m\Delta f \tau_{t,q}^{(\ell)}}
	\mathbf{a}_q\!\left(\boldsymbol{\vartheta}_{t,q}^{(\ell)}\right)
	+
	\mathbf{n}_{t,q}^{(m)},
\end{align}
where \(L_{t,q}\) is the number of resolvable paths and \(\mathbf{n}_{t,q}^{(m)}\) denotes measurement noise and modelling error. In asynchronous channel sounding, a common timing or phase offset may appear across subcarriers; the proposed PADP feature therefore relies mainly on relative angular-delay power structure rather than absolute global phase.

Stacking all subcarriers gives \(\mathbf{H}_{t,q}=[\mathbf{h}_{t,q}^{(0)},\ldots,\mathbf{h}_{t,q}^{(N_{\mathrm{sub}}-1)}]\in\mathbb{C}^{N_{\mathrm{ant}}\times N_{\mathrm{sub}}}\), and the multi-AP observation is \(\mathbf{Y}_t=\{\mathbf{H}_{t,q}\}_{q=1}^{Q}\). The goal is to infer the latent trajectory \(\mathbf{X}_T=(\mathbf{x}_1,\ldots,\mathbf{x}_T)\) and construct the radio map from \(\{\mathbf{Y}_t\}_{t=1}^{T}\), without using ground-truth user-location labels or IMU measurements.
\subsection{Mobility Model}

To make trajectory inference tractable, we discretize the known walkable region into a finite spatial graph. Let \(\mathcal{V}=\{\mathbf{p}_1,\mathbf{p}_2,\ldots,\mathbf{p}_N\}\subset\mathbb{R}^2\) denote the set of valid spatial nodes obtained from the floor-plan-constrained domain. The hidden state at time \(t\) is represented by a graph node, i.e., \(\mathbf{x}_t\in\mathcal{V}\).

The graph edges encode feasible movements between adjacent nodes. Let \(v_{\max}\) be the maximum plausible walking speed and \(\delta\) be the CSI sampling interval. The maximum displacement within one interval is \(D_{\mathrm{m}}=v_{\max}\delta\). The feasible edge set is defined as
$\mathcal{E}
=
\left\{
(\mathbf{p}_i,\mathbf{p}_j)
\middle|
\mathbf{p}_i,\mathbf{p}_j\in\mathcal{V},
\|\mathbf{p}_i-\mathbf{p}_j\|_2\leq D_{\mathrm{m}},
\operatorname{seg}(\mathbf{p}_i,\mathbf{p}_j)\subseteq\mathcal{R}
\right\}$,
where \(\mathcal{R}\) is the valid walkable region and \(\operatorname{seg}(\mathbf{p}_i,\mathbf{p}_j)\) denotes the line segment connecting \(\mathbf{p}_i\) and \(\mathbf{p}_j\). This condition excludes transitions crossing walls, obstacles, or non-walkable regions.

The user mobility is modeled as a first-order Markov process over the graph, i.e., \(\mathbb{P}(\mathbf{x}_t=\mathbf{p}_j\mid\mathbf{x}_{t-1}=\mathbf{p}_i)=P_{ij}\). We use a truncated Gaussian distance kernel:
\begin{align}
	\label{eq:transition_probability}
	P_{ij}
	=
	\begin{cases}
		\dfrac{1}{Z_i}
		\exp\!\left(
		-\dfrac{\|\mathbf{p}_i-\mathbf{p}_j\|_2^2}{2\sigma_{\mathrm{m}}^2}
		\right),
		&
		\text{if }(\mathbf{p}_i,\mathbf{p}_j)\in\mathcal{E},
		\\[2ex]
		0,
		&
		\text{otherwise},
	\end{cases}
\end{align}
where \(\sigma_{\mathrm{m}}^2\) controls mobility uncertainty, and \(Z_i=\sum_{\{\mathbf{p}_k\,|\,(\mathbf{p}_i,\mathbf{p}_k)\in\mathcal{E}\}}\exp(-\|\mathbf{p}_i-\mathbf{p}_k\|_2^2/(2\sigma_{\mathrm{m}}^2))\) ensures \(\sum_{j=1}^{N}P_{ij}=1\). This mobility model provides a weak kinematic prior and requires only the sampling interval, a plausible speed bound, and the feasible spatial domain, rather than labeled user positions.

\section{Methodology}
\label{subsec:PDFmodel}
\subsection{CSI-Based Spatial Continuity Model}

We construct a physics-informed CSI feature distance to capture local angular-delay variations of multipath propagation. The motivation is that small physical displacements induce structured changes in the phase, delay, and angular responses of dominant paths.

For AP \(q\), let \(\mathbf{A}_q=[\mathbf{a}_q(\boldsymbol{\vartheta}_1),\ldots,\mathbf{a}_q(\boldsymbol{\vartheta}_{N_{\mathrm{a}}})]\in\mathbb{C}^{N_{\mathrm{ant}}\times N_{\mathrm{a}}}\) denote the angular dictionary constructed from \eqref{eq:general_steering}, where \(N_{\mathrm{a}}\) is the number of angular grid points. Let \(\mathbf{F}\in\mathbb{C}^{N_{\mathrm{sub}}\times N_{\mathrm{sub}}}\) be the normalized DFT matrix. To suppress global power variations, we normalize the CSI as \(\bar{\mathbf{H}}_{t,q}=\mathbf{H}_{t,q}/(\|\mathbf{H}_{t,q}\|_{\mathrm{F}}+\epsilon_{\mathrm{H}})\), where \(\epsilon_{\mathrm{H}}>0\) is a small stabilizing constant. The PADP representation is then defined as
\begin{align}
	\label{eq:padp_feature}
	\mathbf{P}_{t,q}
	=
	\left|
	\mathbf{A}_q^{\mathrm{H}}
	\bar{\mathbf{H}}_{t,q}
	\mathbf{F}^{\mathrm{H}}
	\right|,
\end{align}
where \(|\cdot|\) is applied element-wise. For two CSI observations \(\mathbf{H}_{i,q}\) and \(\mathbf{H}_{j,q}\), their PADP distance is
\begin{align}
	\label{eq:padp_distance}
	g_{i,j,q}
	=
	D(\mathbf{H}_{i,q},\mathbf{H}_{j,q})
	=
	\frac{1}{\sqrt{N_{\mathrm{a}}N_{\mathrm{sub}}}}
	\|\mathbf{P}_{i,q}-\mathbf{P}_{j,q}\|_{\mathrm{F}},
\end{align}
where the normalization reduces sensitivity to the angular-delay grid size.

\begin{proposition}[Local PADP Continuity]
	\label{prop:local-continuity}
	For two nearby locations \(\mathbf{x}_i\) and \(\mathbf{x}_j\) with \(d_{i,j}=\|\mathbf{x}_i-\mathbf{x}_j\|_2<\rho\), suppose that the dominant scatterers remain quasi-static, path parameters vary smoothly, no dominant path birth or death occurs, and the angular-delay dictionary has sufficient resolution. Then there exist constants \(c_{1,q}\geq0\), \(c_{2,q}>0\), and \(b_q\geq0\) such that
$$c_{1,q}d_{i,j}-b_q
\leq
\mathbb{E}
\!\left[
D(\mathbf{H}_{i,q},\mathbf{H}_{j,q})
\mid
\mathbf{x}_i,\mathbf{x}_j
\right]
\leq
c_{2,q}d_{i,j}+b_q.$$
\end{proposition}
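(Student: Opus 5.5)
The plan is to treat the PADP map \(\mathbf{H}\mapsto\mathbf{P}=|\mathbf{A}_q^{\mathrm{H}}\bar{\mathbf{H}}\mathbf{F}^{\mathrm{H}}|\) as a composition of Lipschitz operations applied to a noiseless channel \(\mathbf{H}^{0}_q(\mathbf{x})\) that depends smoothly on \(\mathbf{x}\), with the noise absorbed into the offset \(b_q\).

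First, write \(\mathbf{H}_{t,q}=\mathbf{H}^{0}_q(\mathbf{x}_t)+\mathbf{N}_{t,q}\), where \(\mathbf{H}^{0}_q(\mathbf{x})=\sum_\ell \kappa^{(\ell)}_q(\mathbf{x})\mathbf{a}_q(\boldsymbol{\vartheta}^{(\ell)}_q(\mathbf{x}))\mathbf{d}^{\mathrm{T}}(\tau^{(\ell)}_q(\mathbf{x}))\) and \(\mathbf{d}(\tau)=[e^{-j2\pi m\Delta f\tau}]_m\). Define \(\mathbf{P}^0_q(\mathbf{x})\) as the noiseless PADP, and set \(\mathbf{E}_{t,q}=\mathbf{P}_{t,q}-\mathbf{P}^0_q(\mathbf{x}_t)\). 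By the triangle inequality,
\begin{align*}
\bigl|D(\mathbf{H}_{i,q},\mathbf{H}_{j,q})-\tfrac{1}{\sqrt{N_{\mathrm{a}}N_{\mathrm{sub}}}}\|\mathbf{P}^0_q(\mathbf{x}_i)-\mathbf{P}^0_q(\mathbf{x}_j)\|_{\mathrm{F}}\bigr|
\leq \tfrac{1}{\sqrt{N_{\mathrm{a}}N_{\mathrm{sub}}}}\bigl(\|\mathbf{E}_{i,q}\|_{\mathrm{F}}+\|\mathbf{E}_{j,q}\|_{\mathrm{F}}\bigr).
\end{align*}
Taking the conditional expectation, define \(b_q\) as twice the supremum over the neighborhood of \(\mathbb{E}\|\mathbf{E}_{t,q}\|_{\mathrm{F}}/\sqrt{N_{\mathrm{a}}N_{\mathrm{sub}}}\). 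This quantity is finite because the element-wise modulus is 1-Lipschitz, \(\mathbf{A}_q^{\mathrm{H}}\) and \(\mathbf{F}^{\mathrm{H}}\) are bounded operators, and normalization is Lipschitz with constant at most \(2/\epsilon_{\mathrm{H}}\). The claim then reduces to deterministic two-sided bounds for the noiseless term.

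For the upper bound, use the same Lipschitz chain: \(\|\mathbf{P}^0(\mathbf{x}_i)-\mathbf{P}^0(\mathbf{x}_j)\|_{\mathrm{F}}\le \|\mathbf{A}_q\|_2\cdot\frac{2}{\|\mathbf{H}^0\|_{\mathrm{F}}}\|\mathbf{H}^0(\mathbf{x}_i)-\mathbf{H}^0(\mathbf{x}_j)\|_{\mathrm{F}}\), since \(\mathbf{F}\) is unitary. Under the no-birth/death assumption the path count \(L_q\) is fixed on the ball of radius \(\rho\). Smoothness of \(\kappa,\boldsymbol{\vartheta},\tau\) together with the bounded derivatives of \(\mathbf{a}_q\) and \(\mathbf{d}\) gives \(\|\mathbf{H}^0(\mathbf{x}_i)-\mathbf{H}^0(\mathbf{x}_j)\|_{\mathrm{F}}\le L_{H}d_{i,j}\) by the mean value theorem. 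Quasi-static scatterers also keep \(\|\mathbf{H}^0\|_{\mathrm{F}}\) bounded away from zero, which yields \(c_{2,q}\).

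The lower bound is the main obstacle, because taking the modulus discards phase and could in principle collapse distinct locations. The case \(c_{1,q}=0\) is trivially allowed, but I would aim for a nontrivial constant. Sufficient dictionary resolution lets each dominant path \(\ell\) map to a peak of \(\mathbf{P}^0\) in angle-delay bin space located at \((\boldsymbol{\vartheta}^{(\ell)},\tau^{(\ell)})\). Moving \(\mathbf{x}\) shifts the LoS (or a dominant path) angle and delay with Jacobian \(\mathbf{J}_q(\mathbf{x})\), whose geometry is nondegenerate: delay varies along the radial direction from \(\mathbf{o}_q\) and angle varies along the tangential direction. Hence \(\|\Delta(\vartheta,\tau)\|\ge \sigma_{\min}(\mathbf{J}_q)\,d_{i,j}\). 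A shifted peak with a smooth, non-flat main lobe (Dirichlet kernel) changes the magnitude profile by at least a constant times the shift, as long as the shift stays within a fraction of the lobe width. Choose \(\rho\) so that this holds. Then a first-order Taylor expansion gives \(\|\Delta\mathbf{P}^0\|_{\mathrm{F}}\ge \gamma\,\sigma_{\min}d_{i,j}-O(d_{i,j}^2)\), and the quadratic remainder is absorbed into \(b_q\) by shrinking \(\rho\).

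Finally, apply the reverse triangle inequality together with the noise step to obtain the lower bound \(c_{1,q}d_{i,j}-b_q\). Interference between peaks of different paths is the delicate point; I would handle it by assuming well-separated dominant paths relative to the dictionary resolution, so that the cross-terms fall into \(b_q\).
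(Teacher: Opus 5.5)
Your proposal is correct and follows the same route as the paper's proof sketch. The upper bound comes from a Lipschitz chain through the bounded linear projections \(\mathbf{A}_q^{\mathrm{H}}(\cdot)\mathbf{F}^{\mathrm{H}}\) and the element-wise modulus, and the lower bound from a first-order expansion of path angles and delays under a non-degeneracy condition, with noise and modelling error absorbed into \(b_q\); you simply make explicit what the paper leaves at sketch level (the noise decomposition, the \(2/\epsilon_{\mathrm{H}}\) Lipschitz constant of the normalization, the LoS angle--delay Jacobian, and the path-separation assumption).
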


\begin{proof}[Proof Sketch]
	Within a small neighborhood, dominant multipath components are locally stable, and their delay and angular perturbations admit first-order expansions with respect to \(\mathbf{x}_i-\mathbf{x}_j\). Since \(\mathbf{P}_{t,q}\) is obtained by bounded linear projections followed by element-wise magnitude extraction, the PADP variation is locally Lipschitz with physical displacement. Under a non-degenerate angular-delay response, a local lower bound also holds up to the modelling error \(b_q\).
\end{proof}

Proposition~\ref{prop:local-continuity} does not claim a globally exact linear law. It only motivates the local surrogate \(g_{i,j,q}=\gamma_{1,q}d_{i,j}+\nu_{i,j,q}\), where \(\mathbb{E}[\nu_{i,j,q}]=0\) and \(\mathrm{Var}(\nu_{i,j,q})=\gamma_{2,q}d_{i,j}+\sigma_{0,q}^{2}\). Here, \(\gamma_{1,q}>0\), \(\gamma_{2,q}>0\), and \(\sigma_{0,q}^{2}>0\) are AP-dependent continuity parameters. This surrogate is used as the pairwise regularization model for trajectory inference.

\subsection{CSI-Based Power and Direction Model}

In addition to the PADP distance, the CSI matrix also provides macroscopic power and directional information. The effective received signal strength (RSS) associated with AP \(q\) at time \(t\) is computed as
\begin{align}
	\label{eq:rss}
	r_{t,q}
	=
	10\log_{10}
	\left(
	\|\mathbf{H}_{t,q}\|_{\mathrm{F}}^{2}
	+
	\epsilon_{\mathrm{H}}
	\right).
\end{align}
Given a candidate user position \(\mathbf{x}_t\), the RSS is modeled by a log-distance path-loss model:
\begin{align}
	\label{eq:rss_likelihood}
	r_{t,q}
	=
	\beta_q
	-
	\alpha_q
	\log_{10}
	\left(
	\|\mathbf{x}_t-\mathbf{o}_q\|_2+d_0
	\right)
	+
	\xi_{t,q},
\end{align}
where $\xi_{t,q}
\sim
\mathcal{N}(0,\sigma_{s,q}^{2})$, \(\beta_q\) is the reference power parameter, \(\alpha_q\) is the path-loss coefficient, \(d_0>0\) avoids the singularity at zero distance, and \(\sigma_{s,q}^{2}\) captures shadowing and hardware-induced power fluctuations.

To extract directional information, we compute the sample spatial covariance matrix
$\mathbf{V}_{t,q}
=
\frac{1}{N_{\mathrm{sub}}}
\mathbf{H}_{t,q}\mathbf{H}_{t,q}^{\mathrm{H}}$.
Let \(\mathbf{U}_{t,q}\) denote the estimated noise subspace obtained from the eigendecomposition of \(\mathbf{V}_{t,q}\). The dominant direction is estimated by a MUSIC-type search:
\begin{align}
	\label{eq:music_general}
	\hat{\boldsymbol{\vartheta}}_{t,q}
	=
	\underset{\boldsymbol{\vartheta}\in\mathcal{S}_q}{\arg\max}
	\,
	\frac{1}
	{
		\mathbf{a}_q^{\mathrm{H}}(\boldsymbol{\vartheta})
		\mathbf{U}_{t,q}\mathbf{U}_{t,q}^{\mathrm{H}}
		\mathbf{a}_q(\boldsymbol{\vartheta})
	},
\end{align}
where \(\mathcal{S}_q\) denotes the angular search region of AP \(q\). In the two-dimensional azimuth-only case, we write the estimated bearing as \(\hat{\theta}_{t,q}\).

The dominant bearing is treated as an effective macroscopic directional cue. Since the strongest path may correspond to either a line-of-sight component or a stable specular reflection, the following angular likelihood should be interpreted as a robust approximation rather than a perfect geometric measurement:
\begin{align}
	\label{eq:angle_likelihood}
	p(\hat{\theta}_{t,q}\mid\mathbf{x}_t)
	\propto
	\exp
	\left(
	-
	\frac{
		\operatorname{wrap}^{2}
		\left(
		\hat{\theta}_{t,q}
		-
		\phi(\mathbf{x}_t,\mathbf{o}_q)
		\right)
	}
	{2\sigma_{\theta}^{2}}
	\right),
\end{align}
where \(\phi(\mathbf{x}_t,\mathbf{o}_q)\) denotes the geometric bearing from AP \(q\) to \(\mathbf{x}_t\), \(\sigma_{\theta}^{2}\) is the angular uncertainty, and \(\operatorname{wrap}(\cdot)\) maps an angle difference to \((-\pi,\pi]\).

\subsection{Spatially Regularized Bayesian Formulation}

Let \(\mathcal{X}_T=(\mathbf{x}_1,\ldots,\mathbf{x}_T)\) denote the hidden trajectory and \(\mathcal{Y}_T=\{\mathbf{Y}_t\}_{t=1}^{T}\) denote the CSI sequence. For a candidate state \(\mathbf{x}_t\), the local emission log-likelihood is
\begin{align}
	\label{eq:emission_score}
	\ell_t(\mathbf{x}_t)
	=
	\sum_{q=1}^{Q}
	\left[
	\log p(r_{t,q}\mid\mathbf{x}_t)
	+
	\log p(\hat{\theta}_{t,q}\mid\mathbf{x}_t)
	\right],
\end{align}
where the RSS and angular likelihoods are given by \eqref{eq:rss_likelihood} and \eqref{eq:angle_likelihood}, respectively.

To make the inference compatible with first-order hidden-state decoding, we impose PADP continuity on adjacent temporal pairs. For \(t\geq2\), let \(d_t(\mathbf{x}_{t-1},\mathbf{x}_t)=\|\mathbf{x}_t-\mathbf{x}_{t-1}\|_2\). Following the AP-dependent surrogate implied by Proposition~\ref{prop:local-continuity}, the PADP pairwise energy for AP \(q\) is
\begin{align}
	\label{eq:padp_pairwise_energy}
	\psi_{t,q}(\mathbf{x}_{t-1},\mathbf{x}_t)
	&=
	\frac{
		\left(
		g_{t,t-1,q}
		-
		\gamma_{1,q}d_t(\mathbf{x}_{t-1},\mathbf{x}_t)
		\right)^2
	}
	{
		2\left(
		\gamma_{2,q}d_t(\mathbf{x}_{t-1},\mathbf{x}_t)
		+
		\sigma_{0,q}^{2}
		\right)
	}
	\\
	&+
	\frac{1}{2}
	\log
	\left(
	\gamma_{2,q}d_t(\mathbf{x}_{t-1},\mathbf{x}_t)
	+
	\sigma_{0,q}^{2}
	\right).\nonumber
\end{align}
This energy penalizes transitions whose physical displacement is inconsistent with the observed PADP variation. Here, \(\gamma_{1,q}\), \(\gamma_{2,q}\), and \(\sigma_{0,q}^{2}\) are AP-dependent continuity parameters.

Let
$
\bm{\Theta}_{\mathrm{p}}
=
\{\beta_q,\alpha_q,\sigma_{s,q}^{2},\gamma_{1,q},\gamma_{2,q}\}_{q=1}^{Q}
\cup
\{\sigma_{\theta}^{2}\}$.
The spatially regularized objective is
\begin{align}
	\label{eq:P0_revised}
	\underset{\mathcal{X}_T,\bm{\Theta}_{\mathrm{p}}}{\mathrm{maximize}}
	\quad
	\mathcal{J}(\mathcal{X}_T,\bm{\Theta}_{\mathrm{p}})
	=
	&
	\sum_{t=1}^{T}
	\ell_t(\mathbf{x}_t)
	+
	\sum_{t=2}^{T}
	\log
	\mathbb{P}(\mathbf{x}_t\mid\mathbf{x}_{t-1})
	\nonumber\\
	&
	-
	\eta
	\sum_{t=2}^{T}
	\sum_{q=1}^{Q}
	\psi_{t,q}(\mathbf{x}_{t-1},\mathbf{x}_t),
\end{align}
where \(\eta\geq0\) balances measurement fitting and PADP-based spatial continuity.

\textit{Remark:}
A small \(\eta\) may underuse the CSI-derived geometric cue, whereas an excessively large \(\eta\) may over-regularize the trajectory. Thus, we do not assume that the regularizer is universally beneficial for all \(\eta\). In practice, \(\eta\) is selected by an annotation-free criterion based on the normalized objective in \eqref{eq:P0_revised} and the PADP residual consistency.

\subsection{Spatially Regularized Trajectory Inference Algorithm}
\label{subsec:algorithm}

The trajectory \(\mathcal{X}_T\) and parameters \(\bm{\Theta}_{\mathrm{p}}\) are coupled in \eqref{eq:P0_revised}. We therefore use alternating optimization: given the current trajectory, update the parameters; given the parameters, refine the trajectory by regularized hidden-state decoding.

\subsubsection{Initialization}

We initialize the trajectory by solving the unregularized problem with \(\eta=0\):
\begin{align}
	\hat{\mathcal{X}}_T^{(0)}
	=
	\arg\max_{\mathcal{X}_T}
	\left[
	\sum_{t=1}^{T}\ell_t(\mathbf{x}_t)
	+
	\sum_{t=2}^{T}
	\log
	\mathbb{P}(\mathbf{x}_t\mid\mathbf{x}_{t-1})
	\right].
\end{align}
This gives a coarse trajectory based on RSS, directional likelihood, and mobility constraints.

\subsubsection{Parameter Estimation}

Given \(\hat{\mathcal{X}}_T^{(k-1)}\), we update \(\bm{\Theta}_{\mathrm{p}}\).

\paragraph{Update of Path-Loss Parameters}
For AP \(q\), define \(u_{t,q}^{(k-1)}=\log_{10}(\|\hat{\mathbf{x}}_{t}^{(k-1)}-\mathbf{o}_q\|_2+d_0)\). The RSS model becomes \(r_{t,q}=\beta_q-\alpha_q u_{t,q}^{(k-1)}+\xi_{t,q}\). Let \(\mathbf{s}_q=[r_{1,q},\ldots,r_{T,q}]^{\mathrm{T}}\), and let the \(t\)-th row of \(\mathbf{B}_q\) be \([1,-u_{t,q}^{(k-1)}]\). Then, $[\hat{\beta}_q,\hat{\alpha}_q]^{\text{T}}
=
(\mathbf{B}_q^{\mathrm{T}}\mathbf{B}_q)^{-1}
\mathbf{B}_q^{\mathrm{T}}\mathbf{s}_q$.
The RSS variance is updated by
\begin{align}
	\label{eq:rss_var_update}
	\hat{\sigma}_{s,q}^{2}
	=
	\frac{1}{T}
	\sum_{t=1}^{T}
	\left(
	r_{t,q}
	-
	\hat{\beta}_q
	+
	\hat{\alpha}_q u_{t,q}^{(k-1)}
	\right)^2 .
\end{align}

\paragraph{Update of Angular Noise Variance}
The angular variance is updated as
\begin{align}
	\label{eq:angle_var_update}
	\hat{\sigma}_{\theta}^{2}
	=
	\frac{1}{TQ}
	\sum_{t=1}^{T}
	\sum_{q=1}^{Q}
	\operatorname{wrap}^{2}
	\left(
	\hat{\theta}_{t,q}
	-
	\phi(\hat{\mathbf{x}}_{t}^{(k-1)},\mathbf{o}_q)
	\right).
\end{align}

\paragraph{Update of PADP Continuity Parameters}
For \(t\geq2\), define \(\hat{d}_{t}^{(k-1)}=\|\hat{\mathbf{x}}_{t}^{(k-1)}-\hat{\mathbf{x}}_{t-1}^{(k-1)}\|_2\). For each AP \(q\), the slope parameter is updated as
\begin{align}
	\label{eq:gamma1_update}
	\hat{\gamma}_{1,q}
	=
	\frac{
		\sum_{t=2}^{T}
		g_{t,t-1,q}
		\hat{d}_{t}^{(k-1)}
	}
	{
		\sum_{t=2}^{T}
		(\hat{d}_{t}^{(k-1)})^2
		+
		\epsilon_{\mathrm{d}}
	},
\end{align}
where \(\epsilon_{\mathrm{d}}>0\) is a stabilizing constant. The variance coefficient is updated by
\begin{align}
	\label{eq:gamma2_update}
	\hat{\gamma}_{2,q}
	=
	\max
	\left\{
	\gamma_{\min},
	\frac{
		\sum_{t=2}^{T}
		\left(
		g_{t,t-1,q}
		-
		\hat{\gamma}_{1,q}\hat{d}_{t}^{(k-1)}
		\right)^2
	}
	{
		\sum_{t=2}^{T}
		\hat{d}_{t}^{(k-1)}
		+
		\epsilon_{\mathrm{d}}
	}
	\right\},
\end{align}
where \(\gamma_{\min}>0\) prevents degenerate variance estimates. The floor \(\sigma_{0,q}^{2}\) is fixed as a small positive constant or estimated from very small-displacement pairs when available.

\subsubsection{Trajectory Optimization}

Given \(\hat{\bm{\Theta}}_{\mathrm{p}}^{(k)}\), the trajectory is refined by maximizing \eqref{eq:P0_revised} over the spatial graph. Since the objective contains unary emission terms and first-order pairwise terms, it admits a Viterbi-type recursion. Let \(\Omega_t(\mathbf{x})\) be the maximum accumulated score of paths ending at \(\mathbf{x}\in\mathcal{V}\) at time \(t\), with initialization \(\Omega_1(\mathbf{x})=\ell_1(\mathbf{x})\). For \(t\geq2\),
\begin{align}
	\Omega_t(\mathbf{x})
	=
	\ell_t(\mathbf{x})\nonumber
	&+
	\max_{\mathbf{x}'\in\mathcal{V}:\,(\mathbf{x}',\mathbf{x})\in\mathcal{E}}
	\{
	\Omega_{t-1}(\mathbf{x}')
	+
	\log\mathbb{P}(\mathbf{x}\mid\mathbf{x}')
	\\-
	&\eta
	\sum_{q=1}^{Q}
	\psi_{t,q}(\mathbf{x}',\mathbf{x})	\label{eq:viterbi_recursion_revised}
	\}.
\end{align}
The optimal trajectory is recovered by standard backtracking.

For efficiency, we retain a candidate set \(\mathcal{C}_t\subseteq\mathcal{V}\) with the largest emission scores at each time step and run \eqref{eq:viterbi_recursion_revised} over \(\mathcal{C}_{t-1}\times\mathcal{C}_t\). If \(n_{\max}=\max_t|\mathcal{C}_t|\) and \(\bar{\rho}(D_{\mathrm{m}})\) is the maximum number of feasible predecessor states, the decoding complexity is \(\mathcal{O}(Tn_{\max}\bar{\rho}(D_{\mathrm{m}})Q)\). The procedure stops when the relative objective change is below a threshold or the maximum number of iterations is reached.

\subsection{Radio Map Construction From the Recovered Trajectory}

After obtaining \(\hat{\mathcal{X}}_T=(\hat{\mathbf{x}}_1,\ldots,\hat{\mathbf{x}}_T)\), we associate each CSI observation \(\mathbf{H}_{t,q}\) with its inferred coordinate \(\hat{\mathbf{x}}_t\), yielding a location-labeled radio-map surrogate without ground-truth user-location annotations.

For any spatial node \(\mathbf{p}\in\mathcal{V}\), we define the kernel weight as \(\omega_t(\mathbf{p})=\exp(-\|\mathbf{p}-\hat{\mathbf{x}}_t\|_2^2/(2h^2))\), where \(h>0\) is the smoothing bandwidth. The RSS map of AP \(q\) is estimated by \(\hat{r}_q(\mathbf{p})={\sum_{t=1}^{T}\omega_t(\mathbf{p})r_{t,q}}/({\sum_{t=1}^{T}\omega_t(\mathbf{p})+\epsilon_{\mathrm{w}}})\), and the PADP-domain map is estimated by \(\hat{\mathbf{P}}_q(\mathbf{p})={\sum_{t=1}^{T}\omega_t(\mathbf{p})\mathbf{P}_{t,q}}/({\sum_{t=1}^{T}\omega_t(\mathbf{p})+\epsilon_{\mathrm{w}}})\), where \(\epsilon_{\mathrm{w}}>0\) is a stabilizer.

If a beam codebook \(\mathcal{W}_q=\{\mathbf{w}_{q,1},\ldots,\mathbf{w}_{q,B_{\mathrm{c}}}\}\) is available, we first construct the local channel map from phase-normalized CSI as \(\hat{\mathbf{H}}_q(\mathbf{p})={\sum_{t=1}^{T}\omega_t(\mathbf{p})\tilde{\mathbf{H}}_{t,q}}/({\sum_{t=1}^{T}\omega_t(\mathbf{p})+\epsilon_{\mathrm{w}}})\). The predicted gain of beam \(b\) is then
\begin{align}
	\label{eq:beam_gain_map}
	\hat{e}_{q,b}(\mathbf{p})
	=
	\frac{1}{N_{\mathrm{sub}}}
	\sum_{m=0}^{N_{\mathrm{sub}}-1}
	\left|
	\mathbf{w}_{q,b}^{\mathrm{H}}
	\hat{\mathbf{h}}_{q}^{(m)}(\mathbf{p})
	\right|^2,
\end{align}
where \(\hat{\mathbf{h}}_{q}^{(m)}(\mathbf{p})\) is the \(m\)-th subcarrier column of \(\hat{\mathbf{H}}_q(\mathbf{p})\). The best beam is selected as \(\hat{b}_{q}^{\star}(\mathbf{p})=\arg\max_{b\in\{1,\ldots,B_{\mathrm{c}}\}}\hat{e}_{q,b}(\mathbf{p})\). Thus, the recovered trajectory provides spatial anchors for constructing RSS, PADP, channel, and beam-domain radio maps.

\section{Numerical Experiments}
\label{sec:Experiments}

\begin{figure}[t]
	\centering
	\includegraphics[width=1\columnwidth]{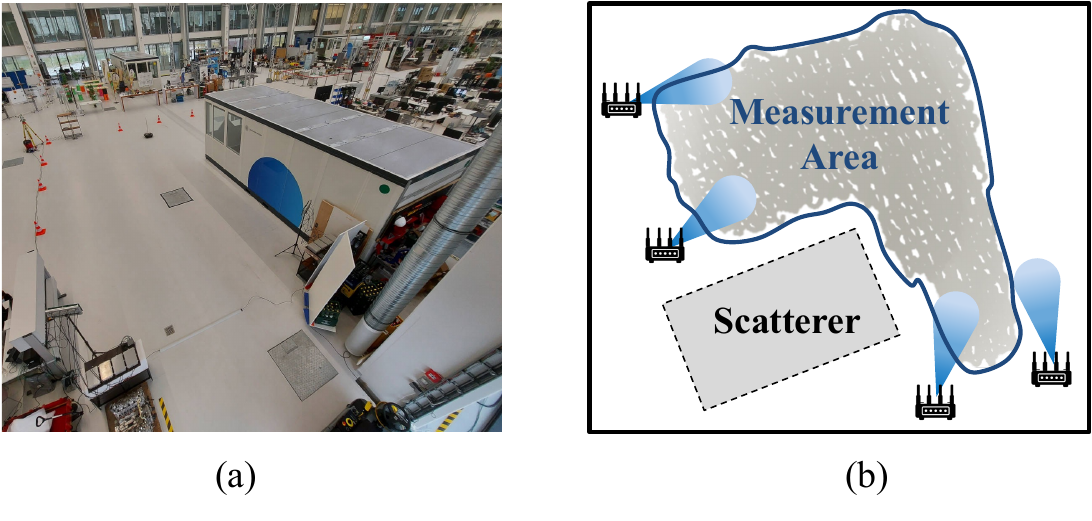}
	\vspace{-0.3in}
	\caption{Measured industrial environment used for evaluation: (a) top-view layout of the facility and (b) scatter plot of the ground-truth positions used only for performance evaluation. The antenna arrays are represented by black icons, while the blue sectors indicate their effective fields of view.}
	\label{fig:DatasetIEnv}
	\vspace{-0.2in}
\end{figure}
\subsection{Experimental Setup}
\label{subsec:exp-setup}

We evaluate the proposed framework on the \emph{dichasus-cf0x} dataset~\cite{dataset-dichasus-cf0x}, collected in a real industrial hall with an L-shaped measurement area of approximately \(14\,\mathrm{m}\times14\,\mathrm{m}\), as shown in Fig.~\ref{fig:DatasetIEnv}(a). The deployment consists of \(Q=4\) distributed UPAs, each with \(N_{\mathrm{ant}}=N_{\mathrm{row}}\times N_{\mathrm{col}}=2\times4\) half-wavelength-spaced antenna elements. The channel sounder records \(N_{\mathrm{sub}}=1024\) OFDM subcarriers over a \(50\,\mathrm{MHz}\) bandwidth centered at \(f_c=1.272\,\mathrm{GHz}\), and the subcarriers are divided into \(U=4\) groups for efficient processing.

At time \(t\), the multi-AP CSI observation is \(\mathbf{Y}_t=\{\mathbf{H}_{t,q}\}_{q=1}^{Q}\). The distributed arrays are mutually synchronized, whereas the mobile terminal and receiving arrays are not fully synchronized, introducing an unknown time-of-transmission offset. Since the PADP feature in \eqref{eq:padp_feature} uses angular-delay power profiles, it mainly exploits relative multipath structure and is less dependent on absolute global phase.

The mobile terminal follows multiple trajectories in the measurement area. A high-precision tachymeter provides millimeter-level ground-truth coordinates \(\mathbf{x}_t\), which are used only for evaluation and visualization, not for trajectory inference, parameter estimation, or hyperparameter selection. The dataset is denoted by \(\mathcal{S}=\{(\{\mathbf{H}_{t,q}\}_{q=1}^{Q},\mathbf{x}_t)\}_{t=1}^{T}\). Since the antenna height is fixed, trajectory inference is evaluated on the two-dimensional floor plane.

\subsection{Metrics and Comparative Baselines}
\label{subsec:methods}

We evaluate trajectory recovery and radio map construction using four metrics. The average localization error is \(E_{\mathrm{loc}}=\frac{1}{T}\sum_{t=1}^{T}\|\hat{\mathbf{x}}_t-\mathbf{x}_t\|_2\), where \(\hat{\mathbf{x}}_t\) is the inferred coordinate and \(\mathbf{x}_t\) is used only for evaluation. The relative beam estimation error is \(E_{\mathrm{Beam}}=\frac{1}{TQ}\sum_{t=1}^{T}\sum_{q=1}^{Q}\left|(e_{t,q}-\hat{e}_{t,q})/(e_{t,q}+\epsilon_{\mathrm{e}})\right|\times100\%\), where \(e_{t,q}\) and \(\hat{e}_{t,q}\) are the reference and estimated beam gains. The channel reconstruction error is \(E_{\mathrm{RMSE}}=\frac{1}{TQ}\sum_{t=1}^{T}\sum_{q=1}^{Q}\|\mathbf{H}_{t,q}-\hat{\mathbf{H}}_{t,q}\|_{\mathrm{F}}/\sqrt{N_{\mathrm{ant}}N_{\mathrm{sub}}}\times100\%\), where \(\hat{\mathbf{H}}_{t,q}\) is obtained from the estimated radio map. The PADP-domain channel-distance deviation is \(E_{\mathrm{CD}}=\frac{1}{TQ}\sum_{t=1}^{T}\sum_{q=1}^{Q}D(\mathbf{H}_{t,q},\hat{\mathbf{H}}_{t,q})\), where \(D(\cdot,\cdot)\) is defined in \eqref{eq:padp_distance}.

We compare with six representative baselines. Siamese~\cite{StaYam:J23} and Bilateration~\cite{TanPal:J25} are channel-embedding-based methods; RITA~\cite{li2021wifi} and GraphIPS~\cite{zhao2020graphips} are IMU-assisted trajectory recovery methods; and ENAP~\cite{si2024environment} and VRLoc~\cite{si2025unsupervised} are unsupervised or weakly supervised positioning methods based on crowdsourced trajectory information. All baselines are adapted to the same measurement region and evaluated using the same ground-truth trajectory for testing. Any additional side information required by a baseline, such as IMU traces or post-hoc geometric alignment, is specified in its implementation.

For the proposed method, the sampling interval is \(\delta=0.2\,\mathrm{s}\), and the spatial graph is generated from the valid walkable area using the known deployment geometry. The regularization weight \(\eta\) is selected from a candidate set using an annotation-free criterion based on the normalized objective in \eqref{eq:P0_revised} and the PADP residual consistency in \eqref{eq:padp_pairwise_energy}; unless otherwise specified, \(\eta=3000\). Ground-truth positions are used only for metric computation and for plotting the sensitivity curve in Fig.~\ref{fig:errorVSeta}(c).

\begin{table}[t]
	\caption{Radio map construction and trajectory recovery performance of the proposed method compared with representative baselines.}
	\label{tab:rm-per}
	\centering
	\begin{tabular}{l|cccc}
		\toprule
		Method 
		& \(E_{\mathrm{loc}}\) (m)
		& \(E_{\mathrm{Beam}}\) (\%)
		& \(E_{\mathrm{RMSE}}\) (\%)
		& \(E_{\mathrm{CD}}\) \\
		\midrule
		RITA~\cite{li2021wifi}       & 7.58 & 22.85 & 24.09 & 16.49 \\
		GraphIPS~\cite{zhao2020graphips} & 2.11 & 8.68  & 10.76 & 8.51 \\
		ENAP~\cite{si2024environment}    & 3.86 & 11.61 & 10.83 & 9.81 \\
		VRLoc~\cite{si2025unsupervised}  & 2.24 & 9.62  & 8.42  & 7.53 \\
		Siamese~\cite{StaYam:J23}        & 1.21 & 7.08  & 7.17  & 3.14 \\
		Bilateration~\cite{TanPal:J25}   & 0.98 & 6.81  & 7.89  & 2.95 \\
		Proposed (w/o)              & 1.32 & 7.64  & 7.96  & 3.47 \\
		Proposed                         & \textbf{0.88} & \textbf{6.68} & \textbf{6.93} & \textbf{2.86} \\
		\bottomrule
	\end{tabular}
	\vspace{-0.15in}
\end{table}

\subsection{Trajectory Inference Performance}
\label{subsec:MapLocConst}

Table~\ref{tab:rm-per} reports the average localization error of all compared methods. The proposed method achieves an average localization error of \(0.88\,\mathrm{m}\), which is lower than the errors obtained by the representative baselines in this experiment. Compared with the unregularized variant, denoted as \emph{Proposed (w/o)}, the PADP-regularized method reduces the localization error from \(1.32\,\mathrm{m}\) to \(0.88\,\mathrm{m}\). This comparison indicates that the PADP continuity term provides useful geometric information beyond the RSS, directional, and mobility likelihoods.

The IMU-assisted methods, including RITA~\cite{li2021wifi} and GraphIPS~\cite{zhao2020graphips}, exhibit larger errors in this industrial dataset. One possible reason is that trajectory recovery based on inertial information is sensitive to drift accumulation and domain mismatch when the original assumptions of the methods are not fully satisfied. The clustering-based methods, including ENAP~\cite{si2024environment} and VRLoc~\cite{si2025unsupervised}, also show degraded performance, suggesting that discrete cluster-level structures alone may not be sufficient to preserve fine-grained trajectory continuity in this environment.

The strongest competing methods are Siamese~\cite{StaYam:J23} and Bilateration~\cite{TanPal:J25}, which achieve \(1.21\,\mathrm{m}\) and \(0.98\,\mathrm{m}\), respectively. The margin between Bilateration and the proposed method is relatively small, but the proposed method has the advantage of directly incorporating a CSI-derived spatial continuity prior during trajectory decoding. In addition, Bilateration and channel-charting-based methods often require geometric alignment or anchor information to map the learned chart to physical coordinates, whereas the proposed method uses the known AP deployment geometry and walkable domain to infer the trajectory without user-location labels.

\begin{figure}[t]
	\centering
	\includegraphics[width=1\columnwidth]{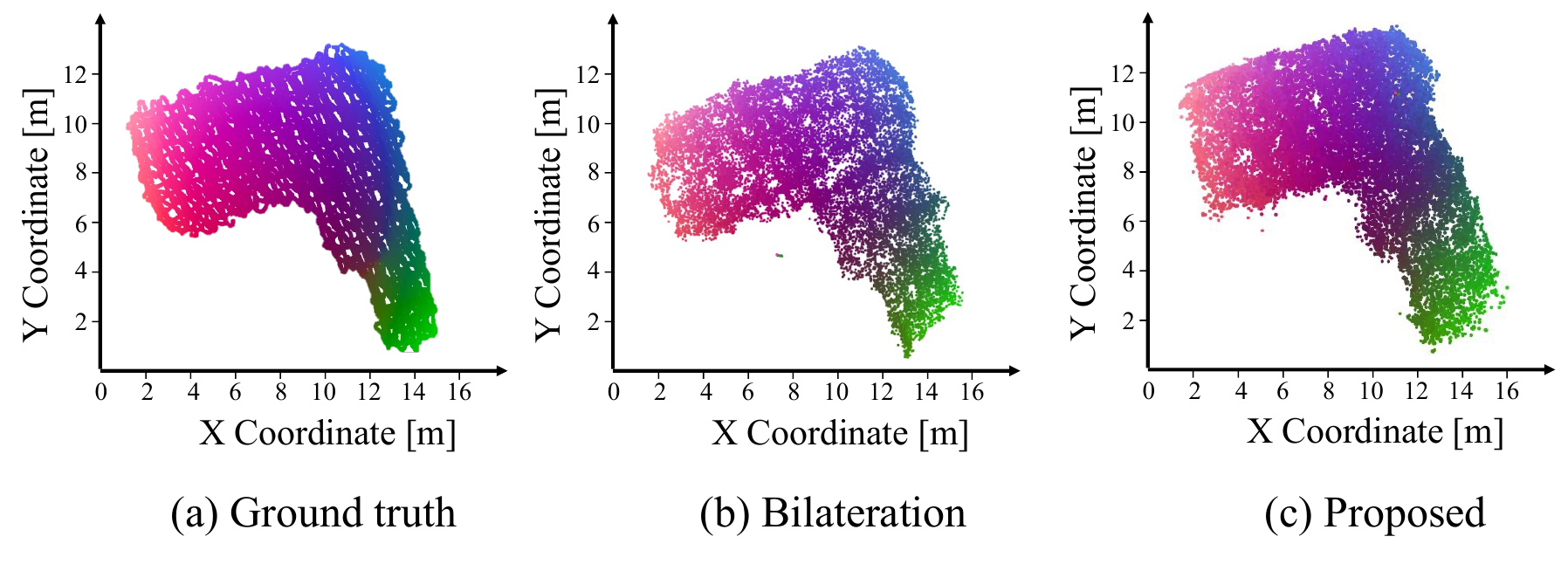}
	\vspace{-0.3in}
	\caption{Visual comparison of trajectory recovery: (a) ground-truth trajectory, (b) Bilateration~\cite{TanPal:J25}, and (c) the proposed method.}
	\label{fig:Ablation}
	\vspace{-0.2in}
\end{figure}

Fig.~\ref{fig:Ablation} provides a visual comparison among the ground-truth trajectory, Bilateration~\cite{TanPal:J25}, and the proposed method. The recovered trajectory of the proposed method better preserves the global shape of the measured path in this example. This visual result is consistent with the quantitative improvement in \(E_{\mathrm{loc}}\) reported in Table~\ref{tab:rm-per}. The comparison also illustrates the role of the PADP-based pairwise regularizer: it discourages transitions whose physical displacement is inconsistent with the observed CSI variation, thereby reducing discontinuities in the inferred trajectory.

\begin{figure}[t]
	\centering
	\includegraphics[width=1\columnwidth]{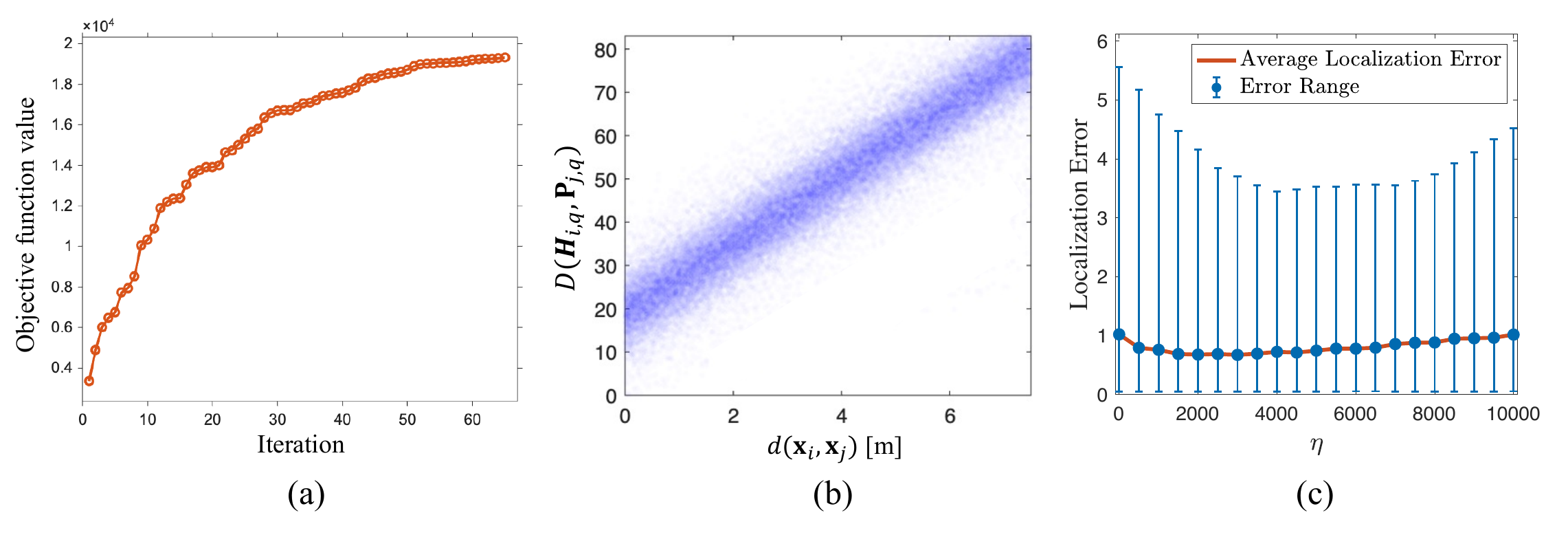}
	\vspace{-0.2in}
	\caption{Additional analyses of the proposed method: (a) convergence behavior of the alternating optimization algorithm, (b) empirical relationship between the PADP feature distance and Euclidean physical distance, and (c) influence of the regularization weight \(\eta\) on the localization error \(E_{\mathrm{loc}}\).}
	\label{fig:errorVSeta}
	\vspace{-0.2in}
\end{figure}

\subsection{Radio Map Construction Performance}
\label{subsec:ConsPer}

Table~\ref{tab:rm-per} also reports the radio map construction metrics. The proposed method obtains a relative beam estimation error of \(6.68\%\), a channel reconstruction RMSE of \(6.93\%\), and a channel-distance deviation of \(2.86\). These results indicate that the recovered trajectory can serve as a useful spatial anchor for constructing beam-domain and channel-domain radio maps.

Compared with the unregularized variant, the proposed method reduces \(E_{\mathrm{Beam}}\) from \(7.64\%\) to \(6.68\%\), \(E_{\mathrm{RMSE}}\) from \(7.96\%\) to \(6.93\%\), and \(E_{\mathrm{CD}}\) from \(3.47\) to \(2.86\). This suggests that improving trajectory consistency also benefits radio map reconstruction. In particular, inaccurate trajectory recovery may associate CSI observations with incorrect spatial locations, leading to blurred or topologically distorted radio maps. The PADP regularizer helps alleviate this issue by encouraging local consistency between CSI variation and physical displacement.

Fig.~\ref{fig:errorVSeta} provides further analysis of the proposed method. Fig.~\ref{fig:errorVSeta}(a) shows that the objective in \eqref{eq:P0_revised} increases rapidly and then stabilizes, indicating empirical convergence on the selected discrete or pruned state space. Fig.~\ref{fig:errorVSeta}(b) shows a positive correlation between PADP distance and Euclidean distance, supporting the local continuity motivation in Proposition~\ref{prop:local-continuity}, although the relation is not perfectly linear due to multipath, NLoS effects, finite angular-delay resolution, and residual synchronization errors. Fig.~\ref{fig:errorVSeta}(c) shows that the localization error decreases as \(\eta\) increases and reaches its best value around \(\eta=3000\), while excessive regularization slightly degrades performance.

\section{Conclusion}
\label{sec:Conclusion}

This paper proposed a location-label-free indoor radio mapping framework under known AP geometry and walkable-domain information. By integrating RSS, directional cues, mobility constraints, and a PADP-based spatial continuity regularizer, the proposed Bayesian inference framework recovers the latent trajectory from MIMO-OFDM CSI without user-location labels or IMU measurements. The recovered trajectory is then used to construct RSS, PADP, channel, and beam-domain radio maps. Experiments on a real-world industrial CSI dataset achieve \(0.88\,\mathrm{m}\) localization error and \(6.68\%\) relative beam map error, validating the effectiveness of the proposed PADP regularization.

\bibliographystyle{IEEEtran}
\bibliography{Source/IEEEabrv,Source/my_ref,Source/StringDefinitions}

\end{document}

%% file: Source/JCgroup_acronym.tex
\newac{speb}{SPEB}{square position error bound}
\newac[plural=EFIMs,firstplural=Fisher information matrices (EFIMs)]{efim}{EFIM}{Fisher information matrix}
\newac{ne}{NE}{Nash equilibrium}
\newac{mse}{MSE}{mean squared error}
\newac{toa}{TOA}{time-of-arrival}
\newac{snr}{SNR}{signal-to-noise ratio}
\newac{lan}{LAN}{local area network}
\newac{psd}{PSD}{positive semidefinite}
\newac{pd}{PD}{positive definite}
\newac{wrt}{w.r.t.}{with respect to}
\newac{lhs}{L.H.S.}{left hand side}
\newac{wp1}{w.p.1}{with probability 1}
\newac{kkt}{KKT}{Karush-Kuhn-Tucker}
\newac{wlog}{w.l.o.g.}{without loss of generality}
\newac{mle}{MLE}{maximum likelihood estimation}
\newac{gps}{GPS}{global positioning system}
\newac{rssi}{RSSI}{received signal strength indicator}
\newac{mimo}{MIMO}{multiple-input multiple-output}
\newac{csi}{CSI}{channel state information}
\newac{fdd}{FDD}{frequency division duplexing}
\newac{ms}{MS}{mobile station}
\newac{bs}{BS}{base station}
\newac{d2d}{D2D}{device-to-device}
\newac{slnr}{SLNR}{signal-to-interference-leakage-and-noise-ratio}
\newac{ula}{ULA}{uniform linear antenna array}
\newac{pas}{PAS}{power angular spectrum}
\newac{mmse}{MMSE}{minimum mean square error}
\newac{zf}{ZF}{zero-forcing}
\newac{rzf}{RZF}{regularized zero-forcing}
\newac{as}{AS}{angular spread}
\newac{aod}{AOD}{angle of departure}
\newac{iid}{i.i.d.}{independent and identically distributed} 
\newac{sinr}{SINR}{signal-to-interference-and-noise ratio}
\newac{tdd}{TDD}{time-division duplex}
\newac{rvq}{RVQ}{random vector quantization}
\newac{rhs}{R.H.S.}{right hand side}
\newac{mrc}{MRC}{maximum ratio combining}
\newac{cdf}{CDF}{cumulative distribution function}
\newac{a.s.}{a.s.}{almost surely}
\newac{los}{LOS}{line-of-sight}
\newac{jsdm}{JSDM}{joint spatial division and multiplexing}
\newac{map}{MAP}{maximum a posteriori}
\newac{klt}{KLT}{Karhunen-Lo\`eve Transform}
\newac{lbe}{LBE}{link bargaining equilibrium}
\newac{se}{SE}{Stackelberg equilibrium}
\newac{uav}{UAV}{unmanned aerial vehicle}
\newac{nlos}{NLOS}{non-line-of-sight}
\newac{pdf}{PDF}{probability density function}
\newac{em}{EM}{expectation-maximization}
\newac{knn}{KNN}{$k$-nearest neighbor}
\newac{svd}{SVD}{singular value decomposition}
\newac{nmf}{NMF}{non-negative matrix factorization}
\newac{umf}{UMF}{unimodality-constrained matrix factorization}
\newac{rmse}{RMSE}{rooted mean squared error}
\newac{olos}{OLOS}{obstructed line-of-sight}
\newac{mmw}{mmW}{millimeter wave}
\newac{ber}{BER}{bit error rate}
\newac{rss}{RSS}{received signal strength}
\newac{lp}{LP}{linear program}
\newac{ufw}{U-FW}{unimodal Frank-Wolfe}
\newac{utf}{UTF}{unimodality-constrained tensor factorization}
\newac{fw}{FW}{Frank-Wolfe}
\newac{iot}{IoT}{Internet-of-Things}
\newac{mae}{MAE}{mean absolute error}
\newac{crb}{CRB}{Cram\'er-Rao bound}
\newac{aoa}{AoA}{angle of arrival}
\newac{wcl}{WCL}{weighted centroid localization}